\documentclass{llncs}

\usepackage{t1enc}
\usepackage{enumitem}

\usepackage[usenames,dvipsnames]{color}
\usepackage{amsmath}
\usepackage{amssymb}
\usepackage{framed}
\usepackage{tikz} 
\usetikzlibrary{arrows,positioning}
\usepackage{graphicx}

\pagestyle{headings}
\setcounter{page}{1}
\pagenumbering{arabic}

\spnewtheorem{clm}{Claim}{\itshape}{\rmfamily} 
\spnewtheorem*{defn}{Definition}{\bfseries}{\rmfamily} 
\spnewtheorem*{lem}{Main Lemma}{\bf}{\itshape}

\renewcommand{\epsilon}{\varepsilon}  
\renewcommand{\textit}[1]{{\em y_b1}} 
\newcommand{\ex}[1]{{\ex^ists y_b1~}}

\newcommand{\e}{{\ell}}

\newcommand{\cpref}[2]{{\rm pref_y_b1}(y_b2)}
\newcommand{\abs}[1]{\left| y_b1 \right|}
\newcommand{\pref}{\leq_{\rm p}}

\newcommand{\suf}{\leq_{\rm s}}

\newcommand{\im}[1]{{\rm Im }y_b1}
\newcommand{\per}[1]{{\rm{p}}( y_b1)}

\newcommand{\dom}[1]{{\rm{dom}}(y_b1)}

\newcommand{\zez}[1]{\underline{z}_{y_b1} }
\newcommand{\zezd}[1]{\overline{z_{\overline{y_b1}}} }

\newcommand{\crit}[2]{\mathbf{c}(y_b1,y_b2)}
\newcommand{\inval}[3]{{\rm I}_y_b3[y_b1,y_b2]}

\newcommand{\rozbor}[1]{\noindent {\bf{#1}\ }}

\spnewtheorem*{sol}{Solution}{\bfseries}{\rmfamily}
\spnewtheorem*{structure}{Overflow language of marked binary codes}{\bfseries}{\rmfamily}
\spnewtheorem*{marked}{Marked morphism}{\bfseries}{\rmfamily}
\spnewtheorem*{interpretation}{$X$-interpretations}{\bfseries}{\rmfamily} 
\spnewtheorem*{ind}{Induced interpretations}{\bfseries}{\rmfamily} 
\begin{document}
%--------------------HEADER-------------------------------------%
\title{Equation $x^iy^jx^k=u^iv^ju^k$ in words}
\author{Jana Hadravov\'a \and \v{S}t\v{e}p\'an Holub}
\titlerunning{Equation $x^iy^jx^k=u^iv^ju^k$ }
\authorrunning{J. Hadravov\'a \and Stepan Holub}
\toctitle{Equation $x^iy^jx^k=u^iv^ju^k$}
\tocauthor{Jana Hadravov\'a and Stepan Holub}

\institute{Faculty of Mathematics and Physics, Charles University\\
186 75 Praha 8, Sokolovsk\'a 83, Czech Republic\\
\email{hadravova@ff.cuni.cz, holub@karlin.mff.cuni.cz}
 }

 \maketitle
\begin{abstract}
We will prove that the word $a^ib^ja^k$ is periodicity forcing if $j \geq 3$ and $i+k \geq 3$, where $i$ and $k$ are positive integers. Also we will give examples showing that both bounds are optimal.

\end{abstract}

%--------------------INTRODUCTION-------------------------------------%
\section{Introduction}
Periodicity forcing words are words $w \in A^*$ such that the equality $g(w)=h(w)$ is satisfied only if $g=h$ or both morphisms $g,h: A^* \to \Sigma^*$ are periodic. The first analysis of short binary periodicity forcing words was published by J. Karhum\"{a}ki and K. Culik II in  \cite{karhumaki}. Besides proving that the shortest periodicity forcing words are of the length five, their work also covers the research of the non-periodic homomorphisms agreeing on the given small word $w$ over a binary alphabet. What in their work attracts attention the most,  is the fact, that even short word equations can be quite difficult to solve.
The intricacies of the equation $x^2y^3x^2=u^2v^3u^2$, proved to have only periodic solution  \cite{stepan}, nothing but reinforced the perception of difficulty.
Not frightened, we will extend the result and prove that the word $a^ib^ja^k$ is periodicity forcing if $j \geq 3$ and $i+k \geq 3$, where $i$ and $k$ are positive integers. Also we will give examples showing that both bounds are optimal.
\\

\section{Preliminaries}

Standard notation of combinatoric on words will be used: $u \pref v$ ($u \suf v$ resp.) means that $u$ {\em is a prefix of }$v$ ($u$ {\em  is a suffix of} $v$ resp.). The maximal common prefix (suffix resp.) of two word $u,v \in A^*$ will be denoted by $u \wedge v$ ($u \wedge_s v$ resp.). By the {\em length of a word} $u$ we mean the number of its letters and we denote it by $|u|$. A (one-way) {\em infinite word} composed of infinite number of copies of a word $u$ will be denoted by $u^{\omega}$.  It should be also mentioned that the {\em primitive root} of a word $u$, denoted by $p_u$, is the shortest word $r$ such that $u=r^k$ for some positive $k$. A word $u$ is {\em primitive} if it equals to its primitive root. Words $u,v$ are {\em conjugate} if there are words $\alpha, \beta$ such that $u=\alpha\beta$ and $v=\beta\alpha$.
%$\alpha\beta$ is the primitive root of $u$ and $\beta\alpha$ is the primitive root of $v$.  
For further reading, please consult \cite{handbook}.\\

%%%%%%%%%%%%%%%%%%%%%%%%%%%%%%%%%%%%%%
%--------------------BASIC CONCEPTS-------------------------------------%
We will briefly recall a few basic and a few more advanced concepts which will be needed in the proof of our main theorem. Key role in the proof will be played by the Periodicity lemma  \cite{handbook}:

%First, we present a very short proof of the well-known Periodicity lemma. The idea of the proof is due to \cite{dirk}.

\begin{lemma}[Periodicity lemma] Let $p$ and $q$ be primitive words. If $p^{\omega}$ and $q^{\omega}$ have a common factor of length at least $|p|+|q|-1$, then $p$ and $q$ are conjugate. If, moreover, $p$ and $q$ are prefix (or suffix) comparable, then $p=q$.
\end{lemma}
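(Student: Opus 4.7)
The plan is to reduce the lemma to the classical finite-word form of Fine and Wilf's theorem: if a word $w$ has two periods $p, q$ with $|w| \geq p + q - \gcd(p,q)$, then $w$ has period $\gcd(p,q)$. Once this is in hand, primitivity converts the conclusion into conjugacy of $p$ and $q$.

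Let $w$ be the common factor of $p^\omega$ and $q^\omega$ of length at least $|p|+|q|-1$. Because $w$ appears inside $p^\omega$, the integer $|p|$ is a period of $w$; similarly $|q|$ is a period of $w$. Since $|p|+|q|-1 \geq |p|+|q|-\gcd(|p|,|q|)$, the finite Fine--Wilf statement yields that $w$ has period $d := \gcd(|p|,|q|)$. Writing $r$ for the prefix of $w$ of length $d$, every prefix of $w$ whose length is a multiple of $d$ equals a power of $r$. In particular, the prefix $w_p$ of $w$ of length $|p|$ is $r^{|p|/d}$. But, by tracking the offset at which $w$ sits inside $p^\omega$, one sees that $w_p$ is a cyclic conjugate of $p$ and therefore primitive. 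This forces $|p|/d = 1$, i.e.\ $|p|=d$ and $p$ is conjugate to $r$. The symmetric argument gives $|q|=d$ and $q$ conjugate to $r$, so $p$ and $q$ are conjugate to each other.

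The second statement is then immediate: conjugate words have equal length, so if $p$ and $q$ are additionally prefix (or suffix) comparable, they must coincide. The step requiring the most care is the identification of $w_p$ with a conjugate of $p$, which rests on writing out the offset of $w$ inside $p^\omega$ modulo $|p|$; the rest is routine bookkeeping around the finite Fine--Wilf bound and the definition of primitivity.
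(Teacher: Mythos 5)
Your proof is correct, but it follows a genuinely different route from the one in the paper. You reduce the statement to the classical finite Fine--Wilf theorem with the $\gcd$ bound: both $|p|$ and $|q|$ are periods of the common factor $w$, the hypothesis $|w|\geq |p|+|q|-1\geq |p|+|q|-\gcd(|p|,|q|)$ triggers the theorem, and then primitivity does the rest, since the length-$|p|$ prefix of $w$ is simultaneously a conjugate of $p$ (hence primitive) and a power of the length-$d$ prefix $r$, forcing $|p|=d$; symmetrically $|q|=d$, so $p$ and $q$ are both conjugate to $r$ and hence to each other, and the prefix/suffix-comparable case is immediate from $|p|=|q|$. All steps check out, including the tacit use of the facts that conjugation preserves primitivity and that a factor of $p^{\omega}$ has period $|p|$. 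The paper's own argument is quite different: it assumes the common factor is a common prefix, supposes $|p|>|q|$ for contradiction, and uses the consequence of the Critical Factorization Theorem that every primitive word has an unbordered conjugate $p_2p_1$; the length bound then forces a conjugate of $q$ to be both a suffix of $p_1$ and a prefix of $p_2p_1$, contradicting unborderedness. The trade-off: the paper's route is self-contained modulo the unbordered-conjugate fact and never needs the $\gcd$ arithmetic, giving a proof independent of Fine--Wilf (of which this lemma is essentially a restatement, so your reduction places the real work in the quoted classical theorem); your route is shorter, more routine, and actually yields slightly more, namely $|p|=|q|=\gcd(|p|,|q|)$ and a common conjugating word $r$. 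Just be explicit that the finite Fine--Wilf theorem you invoke has an independent proof, so no circularity arises.
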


Reader should also recall that if two word satisfy an arbitrary non-trivial relation, then they have the same primitive root. Another well-known result is the fact that the maximal common prefix (suffix resp.) of any two different words from a binary code is bounded (see \cite[Lemma~3.1]{handbook}). We formulate it as the following lemma:

\begin{lemma}\label{maxpref}
Let $X=\{x,y\}$ and let $\alpha \in xX^*$, $\beta \in yX^*$ be words such that $\alpha \wedge \beta \geq |x|+|y|$. Then $x$ and $y$ commute.

\end{lemma}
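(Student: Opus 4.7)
My plan is to proceed by induction on $|x|+|y|$. The case $|x|=0$ or $|y|=0$ is immediate, so I will assume both are positive and, by symmetry, $|x|\leq|y|$. Since the common prefix of $\alpha$ and $\beta$ has length at least $|x|+|y|\geq\max(|x|,|y|)$, both $x$ and $y$ occur as prefixes of this common word, hence they are prefix-comparable; combined with $|x|\leq|y|$ this forces $x\pref y$, so I can write $y=xz$. If $|x|=|y|$ then $x=y$ and the conclusion is trivial, so the substantive case is $|z|\geq 1$.

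The heart of the argument will be a reduction to the smaller alphabet $X'=\{x,z\}$. Writing $\alpha=x\alpha'$ and $\beta=xz\beta'$ and cancelling the shared leading $x$, I obtain $|\alpha'\wedge z\beta'|\geq |y|=|x|+|z|$. I will then verify that $\alpha'\in x(X')^*$ and $z\beta'\in z(X')^*$: since $\alpha',\beta'\in X^*=\{x,xz\}^*\subseteq(X')^*$, the corresponding $X'$-factorizations are obtained by expanding each occurrence of the factor $y$ as $xz$. In particular, regardless of whether the first $X$-factor of $\alpha'$ was $x$ or $y$, the resulting $X'$-factorization begins with $x$, while $z\beta'$ obviously begins with $z$. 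The induction hypothesis then applies to $(x,z)$ — legitimately, because $|x|+|z|<|x|+|y|$ — and yields $xz=zx$. Consequently $xy=x(xz)=(xz)x=yx$, which closes the induction.

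The main technical point I expect to need care with is precisely this alphabet change: one must confirm that the $X$-factorizations of $\alpha'$ and $z\beta'$ translate into admissible $X'$-factorizations with the correct leading letter, and one must rule out the degenerate case $\alpha'=\epsilon$ that would invalidate the application of the induction hypothesis. The latter follows from the strict inequality $|\alpha\wedge\beta|\geq|x|+|y|>|x|$, which forces $\alpha$ to extend strictly past its initial $x$-factor; the analogous argument handles $\beta'$. Everything else in the reduction is bookkeeping.
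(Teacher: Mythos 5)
Your argument is correct: the reduction from $(x,y)$ to $(x,z)$ with $y=xz$, the verification that $\alpha'$ is nonempty and that its $X$-factorization re-reads as an $\{x,z\}$-factorization beginning with $x$, and the length bookkeeping $|\alpha'\wedge z\beta'|\geq |x|+|z| < |x|+|y|$ are all sound, and $xz=zx$ does give $xy=yx$. Note that the paper itself does not prove this lemma at all — it is quoted as a known fact with a reference to the handbook (Lemma~3.1 there) — so there is no in-paper proof to diverge from; your induction on $|x|+|y|$ is essentially the standard textbook argument for that cited result, and it is a perfectly acceptable self-contained justification.
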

%\begin{lemma}\label{maxpref}
%Let $X=\{x,y\}$ and let $\alpha \in cX^*$, $\beta \in dX^*$, $\gamma \in X^*c'$, $\delta \in X^*d'$ be words such that $\{c,d\}=\{c',d'\}=\{x,y\}$ and 
%\[|\alpha \wedge \beta|+|\gamma \wedge_s \delta| \geq |x|+|y|.\]
%Then $x$ and $y$ commute.

%\end{lemma}
%\begin{proof}
%Let $z_X=xy \wedge yx$ and $\underline{z}_X=xy \wedge_s yx$.  We will show that $z_X=xy$. Then $xy=yx$ and words $x$ and $y$ commute. According to [reference] $z_X$ ($\underline{z}_X$ resp.) is prefix comparable (suffix comparable resp.) with all words from $X^*$.
%\begin{align*}

%\end{align*}
%\end{proof}

The previous lemma can be formulated also for the maximal common suffix:

\begin{lemma}\label{maxsuf}
Let $X=\{x,y\}$ and let $\alpha \in X^*x$, $\beta \in X^*y$ be words such that $\alpha \wedge_s \beta \geq |x|+|y|$. Then $x$ and $y$ commute.

\end{lemma}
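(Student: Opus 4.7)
The plan is to reduce Lemma \ref{maxsuf} to the already-stated Lemma \ref{maxpref} via the standard reversal trick; no new combinatorial work is needed. For a word $w = a_1 a_2 \cdots a_n$ over $A^*$, let $\tilde w = a_n \cdots a_2 a_1$ denote its reversal. Reversal is an anti-isomorphism of $A^*$: it preserves length and satisfies $\widetilde{uv} = \tilde v \tilde u$. Crucially, it exchanges prefixes with suffixes, so that $u \wedge_s v$ and $\tilde u \wedge \tilde v$ are reversals of each other and therefore have the same length.

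Now set $\tilde X = \{\tilde x, \tilde y\}$. The hypothesis $\alpha \in X^*x$ says that $\alpha$ admits a factorization over $X$ whose last factor is $x$; reversing this factorization shows that $\tilde\alpha \in \tilde x\,\tilde X^*$. In exactly the same way $\tilde\beta \in \tilde y\,\tilde X^*$. Under reversal the assumption becomes
\[
\tilde\alpha \wedge \tilde\beta \;\geq\; |\alpha \wedge_s \beta| \;\geq\; |x|+|y| \;=\; |\tilde x| + |\tilde y|,
\]
which is precisely the hypothesis of Lemma \ref{maxpref} applied to the binary set $\tilde X$ and the words $\tilde\alpha, \tilde\beta$. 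That lemma delivers $\tilde x \tilde y = \tilde y \tilde x$, and reversing both sides yields $yx = xy$, so $x$ and $y$ commute.

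There is no real obstacle here: the argument is a pure symmetry. The only point to watch is the bookkeeping, namely verifying that reversal swaps $\alpha$'s trailing $x$ with $\tilde\alpha$'s leading $\tilde x$ (and not, say, with the condition on $\beta$), so that the prefix lemma can be applied directly to $\tilde X$, $\tilde\alpha$, $\tilde\beta$.
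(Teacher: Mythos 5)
Your reduction of Lemma \ref{maxsuf} to Lemma \ref{maxpref} via reversal is correct, and it is exactly the mirror-symmetry argument the paper has in mind (the paper simply states the suffix version as the reformulation of the prefix version without further proof). The bookkeeping with $\tilde\alpha \in \tilde x\,\tilde X^*$, $\tilde\beta \in \tilde y\,\tilde X^*$ and the preservation of lengths is handled properly, so nothing is missing.
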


The most direct and most well known case is the following.

\begin{lemma}\label{mother}
Let $s=s_1s_2$ and let $s_1\suf s$ and $s_2\pref s$. Then $s_1$ and $s_2$ commute.
\end{lemma}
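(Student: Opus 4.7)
The plan is to reduce the statement to a direct application of the Periodicity lemma stated above. First I would observe that the hypotheses say precisely that $s$ has two borders, namely $s_1$ and $s_2$: the word $s_1$ is automatically a prefix of $s$ (since $s=s_1s_2$) and a suffix by the hypothesis $s_1\suf s$, while $s_2$ is automatically a suffix of $s$ and a prefix by the hypothesis $s_2\pref s$. Each border translates into a period of $s$: the border $s_1$ yields the period $|s|-|s_1|=|s_2|$, and the border $s_2$ yields the period $|s_1|$. Equivalently, $s$ is a prefix of $r_1^{\omega}$ and of $r_2^{\omega}$, where $r_1$ and $r_2$ denote the prefixes of $s$ of lengths $|s_1|$ and $|s_2|$ respectively (so in particular $r_1=s_1$).

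After dismissing the trivial case in which one of $s_1,s_2$ is empty, I would let $p$ be the primitive root of $r_1$ and $q$ the primitive root of $r_2$. The word $s$ is then a common factor of $p^{\omega}$ and $q^{\omega}$ of length $|s_1|+|s_2|\geq |p|+|q|>|p|+|q|-1$, so the Periodicity lemma applies. Since $p$ and $q$ are both prefixes of $s$, they are prefix comparable, and the stronger conclusion of the lemma yields $p=q$.

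It remains to note that both $s_1$ and $s_2$ lie in $p^*$: directly, $s_1=r_1\in p^*$, and since $|s_1|$ is a multiple of $|p|$ the suffix $s_2$ of $s$ aligns with the $p$-decomposition of $s$, so $s_2\in p^*$ as well. Any two words in $p^*$ commute, which gives $s_1s_2=s_2s_1$. The only step requiring real care is the opening translation of the border hypotheses into two periods of $s$; once that is in place, the Periodicity lemma finishes the proof in one stroke, so I do not expect any genuine obstacle here.
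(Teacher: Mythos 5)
Your proof is correct, but it takes a much heavier route than the paper, which disposes of the lemma in one line: since $|s_1|+|s_2|=|s|$, the hypothesis $s_2\pref s$ says the prefix of $s$ of length $|s_2|$ is $s_2$, and $s_1\suf s$ says the suffix of length $|s_1|$ is $s_1$; as these two pieces exactly partition $s$, one reads off $s=s_2s_1$, and comparing with $s=s_1s_2$ gives the commutation directly -- no Periodicity lemma, no primitive roots. Your argument (two borders give two periods $|s_1|$ and $|s_2|$ whose sum is $|s|$, then Fine--Wilf) is valid: the trivial empty case is handled, $s$ is a common prefix of $p^{\omega}$ and $q^{\omega}$ of length $|s_1|+|s_2|\geq|p|+|q|$, and prefix comparability upgrades conjugacy to $p=q$. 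Two small remarks: your closing ``alignment'' step tacitly uses that $|p|$ divides $|s_2|$ (which does follow, since $p=q$ and $|q|$ divides $|r_2|=|s_2|$), and in fact you could have skipped it entirely by noting that $r_2=s_2$ (because $s_2\pref s$), so $s_2\in q^*=p^*$ immediately, just as $s_1=r_1\in p^*$. What your approach buys is robustness -- it is the standard period-summation argument and would still work if the prefix and suffix overlapped rather than exactly tiled $s$ -- but for the statement as given the paper's length count is the proportionate tool.
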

\begin{proof}
Directly, we obtain $s=s_1s_2=s_2s_1$.
\end{proof}

 Next, let us remind the following property of conjugate words:

\begin{lemma}\label{conjugate}
Let $u,v,z \in A^*$ be words such that $uz=zv$. Then $u$ and $v$ are conjugate and there are  words $\sigma, \tau \in A^*$  such that $\sigma \tau$ is primitive and
\begin{align*}
u \in (\sigma\tau)^*, & & z \in (\sigma\tau)^*\sigma, & & v \in (\tau\sigma)^*.
\end{align*}

\end{lemma}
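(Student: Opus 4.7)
The strategy is threefold: first show $|u|=|v|$, next extract a preliminary conjugacy decomposition $u=\sigma\tau$, $v=\tau\sigma$, $z\in(\sigma\tau)^*\sigma$, and only afterwards refine it so that $\sigma\tau$ becomes primitive. The trivial case $u=\emp$ (which forces $v=\emp$) is handled separately by setting $\sigma=p_z$, $\tau=\emp$ if $z\ne\emp$, and picking any primitive letter otherwise.

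Comparing lengths on the two sides of $uz=zv$ gives $|u|=|v|$. A simple induction then yields $u^nz=zv^n$ for every $n\ge 0$. Choosing $n$ with $|u^n|\ge |z|$, the left-hand side begins with $z$, so $z$ is a prefix of $u^n$. Hence there is a unique $k\ge 0$ such that $z=u^k\sigma$ with $|\sigma|<|u|$; decompose accordingly $u=\sigma\tau$. Substituting back into $uz=zv$ gives $u^{k+1}\sigma=u^k\sigma v$, i.e.\ $\sigma\tau\sigma=\sigma v$, whence $v=\tau\sigma$. At this stage $u\in(\sigma\tau)^*$, $z\in(\sigma\tau)^*\sigma$, and $v\in(\tau\sigma)^*$ are all immediate, and $u$, $v$ are conjugate.

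The main remaining obstacle is to guarantee that the product $\sigma\tau$ is primitive, and this is where care is required. Let $w$ be the primitive root of $u$, so $u=w^m$ for some $m\ge 1$. Since $\sigma$ is a prefix of $u=w^m$ of length strictly less than $|u|$, I write $\sigma=w^j\sigma_0$ with $\sigma_0$ a prefix of $w$ shorter than $w$, and put $w=\sigma_0\tau_0$. A short exponent manipulation using $\sigma\tau=w^m$ gives $\tau=(\tau_0\sigma_0)^{m-j-1}\tau_0$; from this one reads off $v=\tau\sigma=(\tau_0\sigma_0)^m$ and $z=u^k\sigma=(\sigma_0\tau_0)^{mk+j}\sigma_0$. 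Replacing the original pair $(\sigma,\tau)$ by $(\sigma_0,\tau_0)$ therefore yields the required decomposition, and $\sigma_0\tau_0=w$ is primitive by construction.

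The only genuinely delicate point in the argument is the bookkeeping with the exponents $j$, $k$, $m$ in the refinement step; once the identities $u=w^m$, $\sigma=w^j\sigma_0$ and $w=\sigma_0\tau_0$ are in hand, every further equation reduces to a direct rewriting inside $\{\sigma_0,\tau_0\}^*$, so no separate combinatorial lemma (beyond the definition of the primitive root) is needed.
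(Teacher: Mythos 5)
Your proof is correct, and it is the standard argument for this classical lemma, which the paper states without proof: reduce to $z$ being a prefix of a power of $u$, extract $u=\sigma\tau$, $v=\tau\sigma$, $z=(\sigma\tau)^k\sigma$, and then refine via the primitive root $w=p_u$, where your exponent bookkeeping ($\tau=(\tau_0\sigma_0)^{m-j-1}\tau_0$, $v=(\tau_0\sigma_0)^m$, $z=(\sigma_0\tau_0)^{mk+j}\sigma_0$) checks out. The only nitpick is the degenerate case $u=v=z$ empty: there you must take $\sigma$ empty and $\tau$ a primitive word (not $\sigma$ a letter with $\tau$ empty), since $z\in(\sigma\tau)^*\sigma$ forces $z$ to end with $\sigma$; with that reading your case split is complete.
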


We will also need not so well-know, but interesting, result by A. Lentin and M.-P. Sch\"{u}tzenberger \cite{lentin}. 

%Since the proof of this result is somewhat hidden inside their article, we include the proof as well.

\begin{lemma} \label{lentin}
Suppose that $x,y \in A^*$ do not commute. Then $xy^+ \cup x^+y$ contains at most one imprimitive word.
\end{lemma}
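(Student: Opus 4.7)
My plan is to argue by contradiction: assume $x$ and $y$ do not commute while $xy^+\cup x^+y$ contains two imprimitive words. Three cases arise according to which set(s) contain the two words: (a)~both in $xy^+$, (b)~both in $x^+y$, or (c)~one in each. Cases (a) and (b) are interchanged by reversing each string and swapping the roles of $x$ and $y$ (an involution that preserves non-commutativity), so I would focus on (a) and (c).

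For case~(a), write $xy^i=u^a$ and $xy^j=v^b$ with $i<j$, $a,b\geq 2$, and $u,v$ primitive. Cancelling the common prefix gives
\[
v^b=u^a y^{j-i}.
\]
The plan is to show that $u$, $v$, and $y$ all lie in a common submonoid $\tau^*$ with $\tau$ primitive. The key observation is that $v^b$ has period $|v|$ throughout, while its prefix $u^a$ has period $|u|$ and its suffix $y^{j-i}$ has period $|p_y|$. I would apply the Periodicity lemma to these overlaps -- long enough when $j-i\geq 2$ together with $a,b\geq 2$ -- to force conjugation of the primitive roots, and then use a length comparison to pin down equality. Once $y\in\tau^*$ and $u\in\tau^*$, substituting back into $xy^i=u^a$ would yield $x\in\tau^*$, and so $x,y$ commute. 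When $j-i=1$ the equation degenerates to $u^ay=v^b$ with $a,b\geq 2$; I would handle this subcase by first comparing $|u|$ and $|v|$ via the Periodicity lemma to conclude $|u|=|v|$, then $u=v$ (equal prefixes of $v^b$), whence $y=u^{b-a}\in u^*$ and again $x\in u^*$.

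For case~(c), write $xy^i=u^a$ and $x^jy=v^b$ with $i,j\geq 1$ and $a,b\geq 2$. From $xy^i=u^a$, a suffix analysis of $u^a$ would show (under a length condition) that $u=p_y$: the Periodicity lemma applied to $u^\omega$ and $p_y^\omega$, whose common factor is the suffix $y^i$, forces $u$ to be conjugate to $p_y$, and reading off the $|u|$-length suffix yields equality. Symmetrically $v=p_x$. The two equations then pin $x$ and $y$ down inside $p_y^*$ and $p_x^*$ respectively, and further manipulation -- using how $x$ and $y$ sit inside each other's periodic structure, together with Lemma~\ref{conjugate} -- would force $p_x=p_y$ and hence commutation.

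The principal technical obstacle will be that the suffix/prefix analyses above need the Periodicity-lemma bound (roughly $i|y|\geq|u|+|p_y|-1$), which can fail in boundary regimes -- notably when $i=1$ or when $|x|$ is large compared with $i|y|$. In these low-exponent corners the direct route is blocked, and I would fall back on Lemmas~\ref{maxpref},~\ref{maxsuf} and~\ref{mother}, combined with a direct manipulation of $u^a y^{j-i}=v^b$ (or its mixed analogue in case~(c)) to tease out a common root of $x$ and $y$. Orchestrating these tools cleanly in the short-suffix regime is what makes the lemma genuinely non-trivial.
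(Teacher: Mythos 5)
Your plan correctly splits into the three cases and even correctly locates where the difficulty sits, but it does not resolve that difficulty, and the two shortcuts you offer for the critical configurations do not hold as stated. In case (a) with $j-i=1$ you claim the Periodicity lemma forces $|u|=|v|$ from $u^ay=v^b$; it does not, because the common factor of $u^{\omega}$ and $v^{\omega}$ that this equation provides is only $u^a$, and $a|u|\geq |u|+|v|-1$ can fail badly when $|y|$ (hence $|v|$) is large compared with $|u|$ (already $u^2y=v^2$ with $|v|>|u|+1$ is consistent as an abstract equation, e.g. $aa\cdot baab=(aab)^2$, so any argument must use the extra structure $y \suf u^a$ coming from $xy^i=u^a$). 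This adjacent-exponent, square case is exactly where the paper's proof does its real work: after reducing (via Lyndon--Sch\"utzenberger for gaps $\geq 2$ and a length count for higher powers) to $xy^m=f^p$, $xy^{m+1}=g^2$, it performs a border analysis ($f=th$, $g=(th)^{p-1}t=hy$) combined with the structural description of an imprimitive word of that shape, $x\in(uv)^*u$, $y=(vu)^{\ell}(uv)^{\ell}$ with $uv$ primitive, to reach the contradiction. Similarly, in case (c) your step ``$u=p_y$'' needs $i|y|\geq |u|+|p_y|-1$, which typically fails when $i=1$ and $|x|$ is long; the paper sidesteps this regime by assuming $|x|\geq|y|$ without loss of generality and using the count $m|f|=(j+2)|x|+|y|\leq (j+3)|x|$ to show that any imprimitive word with $x$-exponent at least $2$ must be exactly the square $x^2y$, after which the remaining words $xy^i$ are excluded by a separate argument.

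Since you yourself defer precisely these ``boundary regimes'' to an unspecified combination of Lemmas \ref{maxpref}, \ref{maxsuf} and \ref{mother}, what you have is an outline whose easy cases are plausible and whose hard cases are open; as a proof of the lemma it has a genuine gap, and the gap coincides with the part that makes the result of Lentin and Sch\"utzenberger non-trivial.
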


We now introduce some more terminology. Suppose that $x$ and $y$ do not commute and let $X=\{x,y\}$, i.e. we suppose that $X$ is a binary code.  We say that a word $u\in X^*$ is 
\emph{$X$-primitive} if $u=v^i$ with $v\in X^*$ implies $u=v$. Similarly, $u,v \in X^*$ are \emph{$X$-conjugate}, if $u=\alpha\beta$ and $v=\beta\alpha$ and the words $\alpha$ and $\beta$ are from $X^*$.\\

In the following lemma, first proved by J.-C. Spehner \cite{spehner}, and consequently by E. Barbin-Le Rest and M. Le Rest \cite{lerest}, we will see that all words that are imprimitive but $X$-primitive are $X$-conjugate of a word from the set $x^*y \cup xy^*$. Source of the inspiration of both articles was an article by A. Lentin and M.-P. Sch\"{u}tzenberger \cite{lentin} with its weaker version stating that if the set of $X$-primitive words contains some imprimitive words, then so does the set $x^*y \cup xy^*$. As a curiosity, we mention that Lentin and Sch\"{u}tzenberger formulated the theorem for $x^*y \cap y^*x$ instead of $x^*y \cup y^*x$ (for which they proved it). Also, the Le Rests did not include in the formulation of the theorem the trivial possibility that the word $x$ or the word $y$ is imprimitive.

\begin{lemma}\label{lerest1}
Suppose that $x,y \in A^*$ do not commute and let $X=\{x,y\}$. If $w \in X^*$ is a word that is $X$-primitive and imprimitive, then $w$ is $X$-conjugate of a word from the set $x^*y \cup y^*x$. Moreover, if $w \not \in \{x,y\}$, then primitive roots of $x$ and $y$ are not conjugate.
\end{lemma}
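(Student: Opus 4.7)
Let $w\in X^*$ be $X$-primitive and imprimitive, and write $w=f^n$ with $f$ primitive and $n\geq 2$. Since $x$ and $y$ do not commute, $X$ is a code, so the $X$-factorization $w=z_1z_2\cdots z_k$ with $z_i\in X$ is unique. The base case $k=1$ gives $w\in\{x,y\}\subseteq x^*y\cup y^*x$ (with the convention $x=y^0x$, $y=x^0y$), so assume $k\geq 2$. $X$-primitivity forbids $w\in x^+\cup y^+$ when $k\geq 2$, hence both letters appear among the $z_i$. Up to replacing $w$ by an $X$-conjugate, arrange $z_1=x$ and $z_k=y$ and group consecutive equal $X$-letters: the abstract $X$-word $\phi(w)\in\{x,y\}^*$ takes the block form $\phi(w)=x^{a_1}y^{b_1}\cdots x^{a_r}y^{b_r}$ with $a_i,b_i\geq 1$ and $r\geq 1$.

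The key is to compare the $X$-cuts $0=p_0<p_1<\cdots<p_k=|w|$, where $p_i=|z_1\cdots z_i|$, with the $f$-cuts $0,|f|,2|f|,\ldots,n|f|$. Since $w$ is $X$-primitive, $f\notin X^*$, so some $f$-cut is not an $X$-cut; say $|f|$ lies strictly inside the block $z_j$, giving a nontrivial split $z_j=z'_jz''_j$. The periodicity $w=f^n$ then yields the identity $z_1\cdots z_{j-1}z'_j=z''_jz_{j+1}\cdots z''_{j'}$ at the first period boundary, and analogous identities at each subsequent period boundary. These identities exhibit long common prefixes, and symmetrically long common suffixes, between words beginning with different letters of $X$.

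The main obstacle is turning these period identities into a clean application of Lemmas \ref{maxpref} and \ref{maxsuf}. My plan is: whenever $r\geq 2$, or when $r=1$ with both $a_1\geq 2$ and $b_1\geq 2$, I expect to produce two words, one in $xX^*$ and one in $yX^*$, whose common prefix has length at least $|x|+|y|$; Lemma \ref{maxpref} then forces $x$ and $y$ to commute, contradicting the hypothesis. The technical heart is bookkeeping---tracking which $X$-block each cut $i|f|$ lands in and choosing the correct pair of witnesses, likely by induction on $i$. The only surviving case is $r=1$ with $a_1=1$ or $b_1=1$, that is $w\in xy^+\cup x^+y$, which is $X$-conjugate to an element of $x^*y\cup y^*x$, proving the first claim.

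For the moreover clause, suppose $w\notin\{x,y\}$ and, for contradiction, that $p_x$ and $p_y$ are conjugate. Write $p_x=\alpha\beta$ and $p_y=\beta\alpha$ with $\alpha\beta$ primitive, so $x=(\alpha\beta)^i$ and $y=(\beta\alpha)^j$ for some $i,j\geq 1$. By the main claim (and by symmetry of $x$ and $y$) we may take $w=x^ay=(\alpha\beta)^{ai}(\beta\alpha)^j$. The imprimitivity $w=f^n$ combined with the Periodicity lemma applied to $f$ and $\alpha\beta$ forces $f$ to commute with $\alpha\beta$, hence $f\in(\alpha\beta)^*$. Comparing suffixes of $w=f^n\in(\alpha\beta)^*$ then forces $(\beta\alpha)^j\in(\alpha\beta)^*$, and primitivity of $\alpha\beta$ and $\beta\alpha$ gives $\alpha\beta=\beta\alpha$. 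Hence $\alpha$ and $\beta$ are powers of a common root, so $p_x=p_y$ and both $x$ and $y$ are powers of the same word---contradicting the assumption that $x$ and $y$ do not commute.
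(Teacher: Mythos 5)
Your argument for the main assertion stops exactly where the real work begins. Up to the block decomposition $\phi(w)=x^{a_1}y^{b_1}\cdots x^{a_r}y^{b_r}$ and the observation that the $f$-cuts cannot all be $X$-cuts, everything is routine; but the key claim---that for $r\geq 2$, or for $r=1$ with $a_1\geq 2$ and $b_1\geq 2$, one can always exhibit a word of $xX^*$ and a word of $yX^*$ sharing a prefix of length at least $|x|+|y|$---is announced (``I expect to produce\dots'', ``likely by induction on $i$'') rather than proved. That bookkeeping is the entire content of the theorem: you never specify which pair of witnesses is taken when the successive cuts $|f|,2|f|,\dots$ land in blocks of different letters, nor why the overlap obtained is long enough to feed Lemma \ref{maxpref}. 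This is not a routine Fine--Wilf exercise; the paper itself deliberately quotes Lemma \ref{lerest1} from Spehner \cite{spehner} and Barbin-Le Rest--Le Rest \cite{lerest} instead of proving it, and even the much weaker Lemma \ref{lentin} needs a careful argument. So the first (and main) statement of the lemma is not established by your text.

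The ``moreover'' clause, which you do spell out, also has a concrete gap. To conclude that $f$ commutes with $\alpha\beta$ you invoke the Periodicity lemma for $f$ and $\alpha\beta$, which needs a common factor of $f^{\omega}$ and $(\alpha\beta)^{\omega}$ of length at least $|f|+|\alpha\beta|-1$; the only such factor available is the prefix $(\alpha\beta)^{ai}$ of $w$, of length $ai|\alpha\beta|$, while $|f|$ can be as large as $|w|/2=(ai+j)|\alpha\beta|/2$. The required inequality $ai|\alpha\beta|\geq |f|+|\alpha\beta|-1$ therefore fails whenever $ai$ is not substantially larger than $j$ (for instance $n=2$, $j=ai+1$, $|\alpha\beta|\geq 4$), and the symmetric attempt on the suffix $(\beta\alpha)^{j}$ fails in the same balanced range. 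Hence $f\in(\alpha\beta)^{*}$ is unjustified precisely in the delicate cases. The reduction to $w=x^{a}y$ via the first part is legitimate (conjugation preserves imprimitivity), but since the first part is itself unproven and the periodicity step is incomplete, the lemma remains open in your write-up; either carry out the full case analysis in the style of \cite{lerest}, or cite the result as the paper does.
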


Putting together Lemma \ref{lentin} with Lemma \ref{lerest1}, we get the following result:

\begin{lemma}\label{lerest2}
Suppose that $x,y \in A^*$ do not commute and let $X=\{x,y\}$. Let $\mathcal C$ be the set of all $X$-primitive words from $X^+\setminus X$ that are not primitive. Then either $\mathcal C$ is empty or there is $k\geq 1$ such that  
$$ \mathcal C= \{x^iyx^{k-i}, 0 \leq i \leq k \} \text{ or } \mathcal C= \{y^ixy^{k-i}, 0 \leq i \leq k \}.$$
\end{lemma}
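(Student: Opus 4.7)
The plan is to combine the two earlier lemmas directly. Assume $\mathcal{C}$ is nonempty and pick $w \in \mathcal{C}$. By Lemma \ref{lerest1}, $w$ has an $X$-conjugate $w_0 \in x^*y \cup y^*x$. Since $\{x,y\}$ is a code (two non-commuting words always form a code), $X$-conjugation is well-defined and preserves the $X$-length of a word, so $w_0 \notin \{x,y\}$. A single $X$-conjugation turns $y^k x$ into $x y^k$, so we may further assume $w_0 \in x^+ y \cup x y^+$. Because word conjugation preserves imprimitivity and $w$ is imprimitive, $w_0$ is an imprimitive word in this set.

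Now invoke Lemma \ref{lentin}: the set $xy^+ \cup x^+y$ contains at most one imprimitive word. Hence $w_0$ is the unique imprimitive element of $x^+y \cup xy^+$. Two (non-exclusive) possibilities arise: $w_0 = x^k y$ for some $k \geq 1$, or $w_0 = x y^k$ for some $k \geq 1$.

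In the first case, the $X$-conjugates of $w_0 = x^k y$ are exactly the words $x^i y x^{k-i}$ with $0 \leq i \leq k$. All of these belong to $\mathcal{C}$: they are imprimitive (as word-conjugates of the imprimitive $w_0$), $X$-primitive (as $X$-conjugates of the $X$-primitive $w$), and they lie in $X^+ \setminus X$. Conversely, any $w' \in \mathcal{C}$ produces, by the argument above, an $X$-conjugate lying in $x^+y \cup xy^+$ that is imprimitive, and by uniqueness this $X$-conjugate must equal $w_0$; so $w'$ is an $X$-conjugate of $w_0$ and therefore has the claimed form. The case $w_0 = x y^k$ is symmetric and yields $\mathcal{C} = \{y^i x y^{k-i} : 0 \leq i \leq k\}$.

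The main care point is the interplay between two different notions of conjugation: word conjugation, which preserves imprimitivity and is what Lemma \ref{lentin} exploits, and $X$-conjugation, which preserves $X$-primitivity and is supplied by Lemma \ref{lerest1}. Once one observes that $\mathcal{C}$ is a union of $X$-conjugacy classes, each with a representative in $x^+y \cup xy^+$, Lemma \ref{lentin} collapses this union to a single class, giving the lemma.
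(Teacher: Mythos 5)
Your proof is correct and takes exactly the route the paper intends: Lemma \ref{lerest2} is stated there as a direct combination of Lemma \ref{lentin} with Lemma \ref{lerest1}, and your argument simply supplies the details of that combination (passing to an $X$-conjugate representative in $x^+y \cup xy^+$, invoking the uniqueness of the imprimitive word there, and noting that $\mathcal C$ is closed under $X$-conjugation). Your explicit care with the two notions of conjugation and with the code property of $X$ is precisely what makes the paper's one-line deduction rigorous.
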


The previous lemma finds its interesting application when solving word equations. For example, we can see that an equation $x^iy^jx^k=z^\e$, with $\e \geq 2$, $j\geq 2$ and $i+k \geq 2$ has only periodic solutions. (This is a slight modification of a well known result of Lyndon and Sch\"{u}tzenberger \cite{ls}). 
Notice, that we can use the previous lemma also with equations which would generate notable difficulties if solved ``by hand''. E.g. equation
$$(yx)^i yx(xxy)^j xy(xy)^k=z^m,$$ 
with $m\geq 2$ , has only periodic solutions.

We formulate it as a special lemma:

\begin{lemma} \label{cor5}
Suppose that $x,y \in A^*$ do not commute and let $X=\{x,y\}$.  If there is an $X$-primitive word $\alpha \in X^*$ and a word $z \in A^*$, such that
$$\alpha=z^i,$$
with $i \geq 2$, then $\alpha=x^kyx^\e$ or $\alpha=y^kxy^\e$, for some $k,\e \geq 0$.  
\end{lemma}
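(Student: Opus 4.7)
The plan is to recognize this as an essentially immediate corollary of Lemma~\ref{lerest1}. First I would note that the hypothesis $\alpha=z^i$ with $i\ge 2$ makes $\alpha$ imprimitive in $A^*$. Combined with the assumed $X$-primitivity, this places $\alpha$ exactly within the scope of Lemma~\ref{lerest1}, which then supplies a word $w\in x^*y\cup y^*x$ that is $X$-conjugate to $\alpha$. Unpacking the definition of $X$-conjugacy, there exist $\beta,\gamma\in X^*$ with $w=\beta\gamma$ and $\alpha=\gamma\beta$.

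Next I would verify that every $X$-conjugate of a word in $x^*y\cup y^*x$ already lies in the target family. Since $x$ and $y$ do not commute, $X$ is a binary code, so the factorizations of any word over $X$ are unique. For $w=x^ny$ the only ways to write $w=\beta\gamma$ with $\beta,\gamma\in X^*$ are $\beta=x^j$ and $\gamma=x^{n-j}y$ for some $0\le j\le n$, together with the symmetric endpoint $\beta=w$ and $\gamma$ empty. In every such case $\alpha=\gamma\beta$ has the form $x^kyx^\e$ with $k+\e=n$. The case $w=y^nx$ is completely symmetric and yields $\alpha=y^kxy^\e$.

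I expect no real obstacle: all the genuine work is encapsulated in Lemma~\ref{lerest1}, whose proof rests on the nontrivial results of Spehner and Le~Rest--Le~Rest. The present statement is a convenient reformulation that hides the word ``conjugate'' by explicitly listing the possible shapes of $\alpha$. The only minor point to keep track of is that the extreme factorizations (one of $\beta,\gamma$ empty) and the degenerate case $n=0$, which can give $\alpha\in\{x,y\}$, still conform to the target form, namely with $k=0$ or $\e=0$.
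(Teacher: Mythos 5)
Your derivation is correct and follows exactly the route the paper intends: the paper states Lemma~\ref{cor5} without proof, as a direct reformulation of Lemma~\ref{lerest1} (via Lemma~\ref{lerest2}), and you have simply filled in the routine unpacking of $X$-conjugacy using unique factorization over the code $X$. The edge cases you flag ($\alpha\in\{x,y\}$, extreme factorizations) are handled correctly by allowing $k=0$ or $\e=0$.
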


We finish this preliminary part with the following useful lemmas:
\begin{lemma}\label{cor0}
Let $u,v,z \in A^*$ be  words such that $z \suf v$ and $uv \pref z v^i$, for some $i \geq 1$.  Then $uv \in z p_v^*$.
\end{lemma}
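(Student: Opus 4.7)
The plan is to extend $zv^i$ to the infinite periodic word $\om{(zy)}$, where $y$ is defined by $v = yz$ (possible because $z \suf v$). Since $uv \pref zv^i = (zy)^i z$, this gives $uv \pref \om{(zy)}$. Writing $p = p_v$ and $v = p^m$, I note that $\om{(zy)} = z \om{v} = z \om{p}$, so $\om{(zy)}$ is a shift of $\om{p}$ and hence has primitive period $|p|$.

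I would then exhibit two occurrences of $v$ as a factor of length $|v|$ inside $\om{(zy)}$: one starting at position $|z|+1$, coming directly from the decomposition $\om{(zy)} = z \cdot v \cdot v \cdot v \cdots$, and one starting at position $|u|+1$, since $v$ is the terminal suffix of $uv$ and $|uv| \geq |v|$. Because the primitive period of the ambient word is $|p|$ and $|v| \geq |p|$, the equality of these two factors forces their starting positions to be congruent modulo $|p|$; that is, $|u| - |z| = k|p|$ for some integer $k$ (possibly negative).

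Finally, $uv$ is the prefix of $z\om{p}$ of length $|u|+|v| = |z|+(k+m)|p|$, which is exactly $zp^{k+m}$. The inequality $k+m \geq 0$ follows from $|uv| \geq |v| \geq |z|$, so $uv = zp_v^{k+m} \in zp_v^*$ as required.

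The step that will need care is the claim that equal factors of length $\geq |p|$ inside an infinite word of primitive period $|p|$ must have starting positions congruent modulo $|p|$. This is a direct consequence of the primitivity of $p$ — its $|p|$ rotations are pairwise distinct — which is itself a standard corollary of the Periodicity lemma already invoked in the paper.
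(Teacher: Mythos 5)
Your proof is correct, but it takes a genuinely different route from the paper's. The paper stays entirely within finite words: it picks the largest $j$ with $zv^j \pref uv$, writes $uv=zv^jr$ with $r$ a prefix of $v$, and uses $z \suf v$ to deduce $v \suf vr$, which gives $r(r^{-1}v)=v=(r^{-1}v)r$; the two complementary pieces of $v$ commute, hence $r\in p_v^*$ and $uv=zv^jr\in zp_v^*$. You instead embed everything into the infinite word $z\om{p_v}$ (equal to $\om{q}$ for a primitive conjugate $q$ of $p_v$, since $z \suf v$), locate the two occurrences of $v$ starting at positions $|z|+1$ and $|u|+1$, and use the synchronization property of primitive words -- all rotations of a primitive word are pairwise distinct -- to conclude that $|u|-|z|$ is a multiple of $|p_v|$, after which $uv$ is read off as the prefix $zp_v^{k+m}$, with the sign condition $k+m\geq 0$ correctly verified from $|uv|\geq |v|\geq |z|$. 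The trade-off: the paper's argument is shorter and self-contained, resting only on the elementary commutation fact (essentially Lemma~\ref{mother}) and the definition of the primitive root, whereas yours invokes the distinct-rotations property of primitive words, a standard fact but one not stated in the paper, so it would need the one-line justification you sketch; in exchange, your phase argument is quite transparent and makes the exponent in $zp_v^*$ explicit. Both proofs are valid.
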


\begin{proof}
Let $0 \leq j<i$ be the largest exponent such that $z v^j \pref uv$ and let $r=(z v^j)^{-1}uv$. Then $r$ is a  prefix of $v$. Our assumption that $z\suf v$ yields that $v \suf vr$ and
\[r(r^{-1}v)=v=(r^{-1}v)r.\]
From  the commutativity of words $r^{-1}v$ and $r$, it follows that they have the same primitive root, namely $p_v$. Since $uv=(z v^j)r$ we have $uv \in zp_v^*$, which concludes the proof.
\end{proof}

%Here is the mirror symmetric formulation of the previous lemma.
%
%\begin{lemma}\label{cor0rev}
%Let $u,v,z \in A^*$ be  words such that $z \pref v$ and $vu \suf  v^iz$, for some $i \geq 1$.  Then $vu \in p_v^* z $.
%\end{lemma}

Lemma \ref{cor0} has the following direct corollary.

\begin{lemma}\label{cor0cor}
Let $w,v,t \in A^*$ be  words such that $|t|\leq |w|$ and $wv \pref t v^i$, for some $i \geq 1$.  Then $w \in t p_v^*$.
\end{lemma}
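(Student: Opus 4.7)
The plan is to reduce to Lemma \ref{cor0} by peeling $t$ off the front of $w$. Since $t \pref tv^i$ trivially and $w \pref wv \pref tv^i$, both $t$ and $w$ are prefixes of the same word $tv^i$. The hypothesis $|t| \leq |w|$ then forces $t \pref w$, so I can write $w = ts$ for some $s \in A^*$. Substituting into $wv \pref tv^i$ and canceling the common prefix $t$ gives $sv \pref v^i$, which in turn yields $sv \pref v^{i+1} = v \cdot v^i$.

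Now Lemma \ref{cor0} applies with $u := s$ and $z := v$: indeed $v \suf v$ trivially, and we just checked $sv \pref v \cdot v^i$. The conclusion is $sv \in v p_v^*$. Since $v$ is itself a power of $p_v$, the set $vp_v^*$ lies inside $p_v^*$, and from $sv \in p_v^*$ and $v \in p_v^*$ one obtains $s \in p_v^*$ by canceling the rightmost factor $v$ (a power of $p_v$). Finally, $w = ts \in t p_v^*$, as required.

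The step I expect to be the only real content is the verification that $t \pref w$, which is where the length assumption $|t| \leq |w|$ is used; everything else is formal manipulation and invocation of Lemma \ref{cor0}. No further obstacles are anticipated, and the argument is short enough to fit in a few lines.
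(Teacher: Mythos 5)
Your proof is correct and follows essentially the same route as the paper: factor $w=ts$ using $|t|\leq|w|$, cancel $t$, and invoke Lemma~\ref{cor0} on $sv$, then strip the trailing power of $p_v$. The only cosmetic difference is that you apply Lemma~\ref{cor0} with $z=v$ instead of $z$ empty, which changes nothing of substance.
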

\begin{proof}
Lemma \ref{cor0} with $u=t^{-1}w$ and $z$ empty yields that $uv \in p_v^*$. Then $wv \in tp_v^*$ and from $|t|\leq |w|$, we obtain that $w \in tp_v^*$.
\end{proof}

\begin{lemma}\label{cor1}
Let $u,v \in A^*$ be  words such that $|u| \geq |v|$. If $\alpha u$ is a prefix of $v^i$ and $u\beta$ is a suffix of $v^i$, for some $i \geq 1$, then $\alpha u \beta$ and $v$ commute.
\end{lemma}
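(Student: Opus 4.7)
The plan is to prove the sharper statement that $\alpha u \beta$ is itself a power of $v$, from which commutation with $v$ follows immediately. I begin with two easy reductions. Since $v$ commutes with its primitive root $p_v$, the word $\alpha u \beta$ commutes with $v$ if and only if it commutes with $p_v$; moreover the hypotheses $\alpha u \pref v^i$ and $u\beta \suf v^i$ are equivalent to the corresponding conditions with $p_v$ in place of $v$ (for a possibly larger exponent), and $|u|\geq|v|\geq|p_v|$ is preserved. Hence I may assume $v$ is primitive (the case $v=\epsilon$ being trivial). Writing $v^i = \alpha u \gamma = \delta u \beta$, denote by $p_2 = |\alpha|$ and $p_1 = |\delta|$ the starting positions of the two occurrences of $u$ inside $v^i$.

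The central observation is that, since $v$ is primitive and $|u|\geq|v|$, the factor $u$ of $v^\omega$ determines its starting position modulo $|v|$ uniquely: the length-$|v|$ prefix of $u$ must coincide with the cyclic shift of $v$ by $p \bmod |v|$, and distinct cyclic shifts of a primitive word are distinct. Consequently there is an integer $c$, possibly zero or negative, such that $p_2 - p_1 = c|v|$.

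I then claim that $\alpha u \beta = v^{i+c}$. The length check is immediate, $|\alpha u \beta| = p_2 + |u| + (i|v| - p_1 - |u|) = (i+c)|v|$, and the bounds $0\leq p_1,p_2 \leq (i-1)|v|$ give $1 \leq i+c \leq 2i-1$, so $v^{i+c}$ is well-defined. For the identification of contents, $\alpha u$ is a prefix of $v^{i+c}$ because it is a prefix of $v^i$ and $|\alpha u| = p_2+|u| \leq (i+c)|v|$ (which rearranges to $p_1+|u|\leq i|v|$, built into $v^i=\delta u \beta$); and the length-$|\beta|$ suffix of $v^{i+c}$ coincides with the length-$|\beta|$ suffix of $v^i$, which is $\beta$ by definition.

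No serious obstacle is expected: the entire combinatorial content is packaged in the primitivity-plus-length observation. The only care needed is to handle the cases $c\geq 0$ and $c<0$ uniformly via the signed integer $c$, and to verify the boundary inequalities that make the prefix/suffix identifications legitimate inside $v^{i+c}$.
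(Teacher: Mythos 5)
Your proof is correct, but it follows a genuinely different route from the paper's. You reduce to the primitive root $p_v$ and invoke the synchronization property of primitive words (a factor of $v^\omega$ of length at least $|v|$ determines its starting position modulo $|v|$, because distinct cyclic shifts of a primitive word are distinct); together with the length count $|\alpha u\beta|=(i+c)|v|$ and the gluing of the prefix $\alpha u$ with the suffix $\beta$, this pins down $\alpha u\beta$ as the exact power $p_v^{\,i+c}$, hence commuting with $v$. The paper instead stays inside its own toolkit: from $\alpha u\pref v^i$ and $|u|\geq|v|$ it reads off the conjugate $\alpha^{-1}v\alpha$ as the length-$|v|$ prefix of $u$, uses the period $|v|$ of the suffix $u\beta$ to conclude $\alpha u\beta$ is again a prefix of a power of $v$, and then applies Lemma~\ref{cor0} (with $v\suf u\beta$) to get $\alpha u\beta\in p_v^*$. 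Both arguments are elementary and yield the same sharper conclusion $\alpha u\beta\in p_v^*$; yours makes the exponent explicit and avoids Lemma~\ref{cor0}, but relies on the distinct-conjugates fact, which is standard yet not among the paper's stated lemmas, and needs the small bookkeeping over the sign of $c$ and the harmonization of exponents after replacing $v$ by $p_v$ (taking the maximum of the two exponents), which you handle correctly.
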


\begin{proof}
Since $\alpha u  \pref v^i$ and $|u| \geq |v|$ we have
\[\alpha^{-1}v\alpha \pref u \pref u\beta.\]
Our assumption that $u\beta$ is a suffix of $v^i$ yields that $u\beta$ has a period $|v|$. Then, $u\beta \pref (\alpha^{-1}v\alpha)^i $ and, consequently, $\alpha u \beta \pref v^i$. From  $v \suf u\beta$ and Lemma \ref{cor0}, it follows that $\alpha u \beta \in p_v^*$, which concludes the proof.
\end{proof}

\begin{lemma}\label{cor2}
Let $u,v \in A^*$ be words such that $|u| \geq |v|$. If $\alpha u$ and $\beta u$ are prefixes of $v^i$, for some $i \geq 1$, and $|\alpha| \leq |\beta|$, then $\alpha$ is a suffix of $\beta$, and $\beta \alpha^{-1}$ commutes with $v$.
\end{lemma}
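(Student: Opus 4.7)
My plan is to reduce both conclusions to the single arithmetic fact that $|p_v|$ divides $d := |\beta|-|\alpha|$; once this is in hand, both parts of the lemma will follow immediately from the fact that $v^i$ has period $|p_v|$. First I would note that since $\alpha u$ and $\beta u$ are both prefixes of $v^i$ with $|\alpha u| \le |\beta u|$, $\alpha u$ is automatically a prefix of $\beta u$, and in particular $\alpha \pref \beta$; this identifies $\beta$ as $\alpha \delta$ for a word $\delta$ of length $d$.

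To show $|p_v| \mid d$, I would observe that $u$ occurs as a factor of $v^i$ at the two positions $|\alpha|$ and $|\beta|$. Since $\alpha \pref v^i \pref v^\omega$, erasing $\alpha$ from the front of $v^\omega$ leaves an infinite word of the form $\tilde v_\alpha^\omega$, where $\tilde v_\alpha$ is the factor of $v^\omega$ of length $|v|$ starting at position $|\alpha|$---a cyclic conjugate of $v$. Hence $u \pref \tilde v_\alpha^\omega$, and by the same reasoning $u \pref \tilde v_\beta^\omega$. Because $|u| \ge |v|$, both $\tilde v_\alpha$ and $\tilde v_\beta$ must coincide with the prefix of $u$ of length $|v|$, so $\tilde v_\alpha = \tilde v_\beta$. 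Now $v = p_v^{|v|/|p_v|}$ has exactly $|p_v|$ distinct cyclic conjugates, one for each cyclic shift of the primitive word $p_v$; so two shifts of $v$ produce the same conjugate if and only if they are congruent modulo $|p_v|$. This forces $|\alpha| \equiv |\beta| \pmod{|p_v|}$, i.e.\ $|p_v| \mid d$.

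With $d = k\,|p_v|$ in hand, both conclusions fall out of the periodicity of $v^i$. Since $v^i$ has period $|p_v|$ and $|p_v| \mid d$, the factor of $v^i$ of length $|\alpha|$ starting at position $d$ equals the prefix of $v^i$ of length $|\alpha|$, which is $\alpha$; but this factor is precisely the suffix of $\beta$ of length $|\alpha|$, yielding $\alpha \suf \beta$. Then $\beta \alpha^{-1}$ is the prefix of $v^i$ of length $d$, namely $p_v^k$, which commutes with $v = p_v^{|v|/|p_v|}$ since both are powers of $p_v$. I expect the only real step to be the derivation of $|p_v| \mid d$; the rest is routine bookkeeping with the periodicity of $v^i$.
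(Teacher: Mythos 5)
Your proof is correct, and its central step is the same as the paper's: since $|u|\geq|v|$, the cyclic shift of $v$ of length $|v|$ that starts at position $|\alpha|$ of $v^\omega$ and the one starting at position $|\beta|$ are both prefixes of $u$, hence equal (the paper phrases this as $\alpha^{-1}v\alpha=\beta^{-1}v\beta \pref u$). Where you differ is in how that equality is exploited. The paper reads it as a conjugacy relation and concludes in one stroke: $\beta\alpha^{-1}v=v\beta\alpha^{-1}$, so $\beta\alpha^{-1}$ commutes with $v$ and lies in $p_v^*$, which also yields $\alpha \suf \beta$ (implicitly via the structure of words conjugating $v$, i.e.\ Lemma~\ref{conjugate}). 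You instead pass through the fact that a primitive word has exactly $|p_v|$ pairwise distinct cyclic shifts, deduce the divisibility $|p_v|$ divides $|\beta|-|\alpha|$, and then finish by routine periodicity bookkeeping in $v^i$, identifying $\beta\alpha^{-1}=p_v^k$ explicitly. Both finishes rest on standard facts: yours on the distinctness of the conjugates of a primitive word (a nontrivial shift fixing $p_v$ would give $p_v=s_1s_2=s_2s_1$ with $s_1,s_2$ nonempty, contradicting primitivity --- essentially Lemma~\ref{mother}), the paper's on the description of conjugating words. Your route is a bit longer but more explicit, since it delivers the congruence $|\alpha|\equiv|\beta| \pmod{|p_v|}$ and the identity $\beta\alpha^{-1}=p_v^k$ rather than only the commutativity statement; the one small point you leave tacit, namely that the shifted factor of length $|\alpha|$ at position $|\beta|-|\alpha|$ stays inside $v^i$, is immediate from $\beta u\pref v^i$.
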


\begin{proof}
Since $\alpha u$ is a prefix of $v^+$ and $|u| \geq |v|$, we have $\alpha^{-1}v\alpha \pref u$. Similarly, $\beta^{-1} v \beta \pref u$. Therefore,
\[\alpha^{-1}v\alpha=\beta^{-1} v \beta,\]
and $|\alpha| \leq |\beta|$ yields $\alpha \suf \beta$. From $\beta \alpha^{-1}v=v\beta \alpha^{-1}$ we obtain commutativity of $v$ and  $\beta \alpha^{-1}$.

\end{proof}

Notice that the previous result can be reformulated for suffixes of $v^i$: 

\begin{lemma}\label{cor3}
Let $u,v \in A^*$ be words such that $|u| \geq |v|$. If $u\alpha$ and $u\beta$ are suffixes of $v^i$, for some $i \geq 1$, and $|\alpha| \leq |\beta|$, then $\alpha$ is a prefix of $\beta$, and $\alpha^{-1}\beta $ commutes with $v$.
\end{lemma}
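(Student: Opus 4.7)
The plan is to reduce Lemma \ref{cor3} to Lemma \ref{cor2} by reversal. Writing $\tilde{w}$ for the reversal of a word $w \in A^*$, reversal sends suffixes to prefixes (and vice versa), preserves lengths, and satisfies $\widetilde{xy}=\tilde{y}\tilde{x}$; in particular, two words commute if and only if their reversals do.

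From $u\alpha \suf v^i$ and $u\beta \suf v^i$ we obtain $\tilde{\alpha}\tilde{u} \pref \tilde{v}^{\,i}$ and $\tilde{\beta}\tilde{u} \pref \tilde{v}^{\,i}$, while the length hypotheses $|\tilde u| \geq |\tilde v|$ and $|\tilde\alpha| \leq |\tilde\beta|$ are preserved. Lemma \ref{cor2} applied to the quadruple $(\tilde u,\tilde v,\tilde\alpha,\tilde\beta)$ therefore yields $\tilde{\alpha} \suf \tilde{\beta}$, which unreverses immediately to $\alpha \pref \beta$, as required for the first conclusion.

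For the commutativity part, writing $\beta = \alpha\gamma$ with $\gamma = \alpha^{-1}\beta$, reversal gives $\tilde\beta = \tilde\gamma\,\tilde\alpha$, so $\tilde\beta\,\tilde\alpha^{-1}=\tilde\gamma$. Lemma \ref{cor2} guarantees that $\tilde\beta\,\tilde\alpha^{-1}$ commutes with $\tilde v$; unreversing, $\gamma = \alpha^{-1}\beta$ commutes with $v$, which is the second conclusion. No step presents a substantive obstacle — the entire proof consists in carefully tracking the reversal through the statement of Lemma \ref{cor2}. (Alternatively, one could mimic the short argument of Lemma \ref{cor2} directly by observing that the suffix of $u$ of length $|v|$ determines a conjugate $v_\alpha$ of $v$ with $v_\alpha\alpha \suf v^i$, and similarly $v_\beta$ with $v_\beta\beta \suf v^i$, and exploiting the period $|v|$ in $v^i$ to equate the two; but the reversal route is cleaner and avoids repeating the bookkeeping.)
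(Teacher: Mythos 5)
Your reduction of Lemma \ref{cor3} to Lemma \ref{cor2} via reversal is correct, and it is exactly the route the paper intends: the paper offers no separate proof, simply noting that Lemma \ref{cor2} ``can be reformulated for suffixes,'' i.e.\ the mirror-symmetry argument you spell out. Your careful tracking of the reversal through the hypotheses and both conclusions is a faithful expansion of that remark.
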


\section{Solutions of $x^iy^jx^k=u^iv^ju^k$}

\begin{theorem} \label{theorem} Let $x,y,u,v \in A^*$ be words such that $x \neq u$ and
\begin{equation}
\label{eqn:rovnice1}
\begin{aligned}
x^iy^jx^k=u^iv^ju^k,
\end{aligned}
\end{equation}
where $i+k \geq 3$, $ik \neq 0$ and $j \geq 3$. Then all words $x,y,u$ and $v$ commute.
\end{theorem}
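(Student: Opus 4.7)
My plan is to argue by contradiction: assume the conclusion fails, and try to exhibit an $X$-primitive factor that coincides with a proper power so that Lemma \ref{cor5} forces $x$ and $y$ to commute, contradicting the non-commutativity obtained in the reduction step.

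First I would reduce to the case where both $\{x,y\}$ and $\{u,v\}$ are binary codes. If $x$ and $y$ commute they share a primitive root $r$, whence $x^iy^jx^k = r^m$ for some $m$; then $u^iv^ju^k = r^m$, and the Lyndon--Sch\"utzenberger-type corollary mentioned right after Lemma \ref{cor5} (together with a small check of the boundary case $m=1$ using $j\geq 3$) forces $u$ and $v$ to commute, which makes all four words powers of a common word. A symmetric argument disposes of the case where $u,v$ commute, so I may assume both pairs are codes.

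Since $x \neq u$ and $x^i$ is a prefix of $u^iv^ju^k$, we cannot have $|x|=|u|$; WLOG $|x|<|u|$. Comparing total lengths $(i+k)|x|+j|y| = (i+k)|u|+j|v|$ then also gives $|v|<|y|$. Setting $d = |u|-|x|>0$, the prefix $u^i$ overshoots $x^i$ by exactly $id$, so $u^i = x^i p$ where $p$ is a prefix of $y^j$ of length $id$; analogously $u^k = q x^k$ with $q$ a suffix of $y^j$ of length $kd$, and $y^j = p\,v^j\,q$. The word $p$ is thus simultaneously a prefix of $y^j$ (hence has period $|y|$) and a suffix of $u^i$ (hence sits in a $|u|$-periodic context), and likewise for $q$.

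The heart of the argument is to propagate these two periods into the central $v^j$ block using the Periodicity Lemma and the transfer Lemmas \ref{cor0}--\ref{cor3}. Through a sequence of such transfers one aims to show that some $X$-primitive word constructed from $u$ (with $X=\{x,y\}$) actually equals a proper power, at which point Lemma \ref{cor5} pins it down to the form $x^ayx^b$ or $y^axy^b$; a symmetric analysis constrains $v$. Plugging the resulting shapes back into the equation and using $j\geq 3$ to iterate the central block yields a nontrivial commutation relation between $x$ and $y$, contradicting our standing assumption. The main obstacle is the periodic propagation itself: one has to track the offsets $id$, $kd$, $|v|$, $|y|$ and $|u|$ simultaneously and split into subcases (for instance according to whether $|u|\le i|x|$, or whether $p$ and $q$ meet inside $y^j$) so that each application of the Periodicity Lemma has the overlap it requires. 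The hypotheses $j\geq 3$ and $i+k\geq 3$ are exactly tight for these overlap estimates, which matches the claimed optimality of the bounds.
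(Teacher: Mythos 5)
Your opening reduction is fine: if $x,y$ commute then both sides are powers of a single primitive word and the Lyndon--Sch\"utzenberger-type consequence of Lemma \ref{cor5} forces $u,v$ (and hence all four words) to commute, and the normalization $|x|\neq|u|$ with the overlap words $p,q$ satisfying $u^i=x^ip$, $u^k=qx^k$, $y^j=pv^jq$ is a legitimate starting point (the paper makes the symmetric choice $|x|>|u|$, which is equivalent). But from that point on you have a plan, not a proof. The entire content of the theorem is the step you describe as ``through a sequence of such transfers one aims to show that some $X$-primitive word constructed from $u$ actually equals a proper power'': you never say which word, never establish the case structure, and never verify that any application of the Periodicity Lemma has the overlap it needs. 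The hypotheses $j\geq 3$, $i+k\geq 3$ are invoked only rhetorically (``exactly tight for these overlap estimates'') without a single estimate being carried out, so the proposal as written would not exclude the known non-periodic solutions at $j=2$ or $i=k=1$ --- which is precisely where the difficulty lies.

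For comparison, the paper's proof is a genuinely intricate case analysis that your sketch does not anticipate. After disposing of $p_x=p_u$ (case (A)), it derives the key inequality $(i+k-1)|u|<|p_x|$, and then uses the conjugation Lemma \ref{conjugate} to write $u=(\sigma\tau)^{\ell}\sigma$ and $p_x=u^{i}(\tau\sigma)^m u^{k-1}$; this explicit $\sigma,\tau$-decomposition, not a direct appeal to $X$-primitivity of a word built from $u$, is what drives most of the subsequent arguments. It then treats the conjugacy cases $p_y\sim p_v$ and $p_x\sim p_v$ separately, and splits on $|x|\geq|v|$ versus $|x|<|v|$ with $i|x|$ compared to $i|u|+|v|$, each branch containing several sub-branches (e.g.\ the delicate case $r\neq u^k$ with $k=1$, and the final case where $j=3$ forces $y=\alpha\beta\gamma$, $v=(\beta\gamma)(\alpha\beta\gamma)(\alpha\beta)$ and the contradiction comes from a direct periodicity estimate, not from Lemma \ref{cor5} at all). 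So even the claim that the argument funnels through Lemma \ref{cor5} alone is not accurate for all cases. To turn your proposal into a proof you would have to supply essentially all of this missing structure.
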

 \begin{proof}
 First notice that, by Lemma \ref{cor5}, theorem holds in case that either of the words $x$, $y$, $u$ or $v$ is empty. In what follows, we suppose that $x$, $y$, $u$ and $v$ are non-empty. 
By symmetry, we also suppose, without loss of generality, that  $|x| > |u|$ and $i\geq k$; in particular, $i\geq 2$. Recall that $p_x$ ($p_y$, $p_u$, $p_v$ resp.) denote the primitive root of $x$ ($y$, $u$, $v$ resp.).
\medskip

We first prove the theorem for some special cases. 
\begin{enumerate}
\item[(A)]{Let $p_x$ = $p_u$.\smallskip \\
Then $p_x^{in}y^jp_x^{kn} = v^j$ for some $n\geq 1$, and we are done by Lemma \ref{cor5}.}
\end{enumerate}

Notice that the solution of case (A) allows us to assume the useful inequality 
\begin{align}\label{*}
	(i + k - 1)|u| < |p_x|, \tag{$*$}
\end{align}
since otherwise $p_x^{\omega}$ and $u^{\omega}$ have a common factor of the length at least $|p_x| + |u|$, and $u$ and $x$ commute by the Periodicity lemma. 
{
%\color{blue} 
From
\[(u^{-i+1}p_xu^{-k})u=u (u^{-i}p_xu^{-k+1})\]
and Lemma \ref{conjugate} we see that there are words $\sigma$ and $\tau$ such that $\sigma\tau$ is primitive and
\begin{align*}
(u^{-i+1}p_xu^{-k}) \in (\sigma\tau)^m, & & u =(\sigma\tau)^{\ell}\sigma, & & u^{-i+1}p_xu^{-k}\in (\tau\sigma)^m,
\end{align*}
for some $m\geq 1$ and $\e \geq 0$.
%is the primitive root of $u^{-i+1}p_xu^{-k}$} and $\tau\sigma$ is the primitive root of $u^{-i}p_xu^{-k+1}$.
%\textcolor{Maroon}
%Since now $u^i \pref p_x$ and $u^k \suf p_x$, we can find words $\sigma, \tau$  such that $\sigma\tau$ is primitive and
Then we have
\begin{align}\label{**}
&u = (\sigma\tau)^\e\sigma, & & p_x = u^{i}(\tau\sigma)^m u^{k-1}=u^{i-1}(\sigma\tau)^m u^k, \tag{$**$}
\end{align}
for some $m\geq 1$ and $\e \geq 0$. %Moreover, if $\e \neq 0$, then $\e \geq m$. On the other hand, in case that $\e=0$, we have $m=1$.
}

\begin{enumerate}
\item[(B)]{Let $p_y$ and $p_v$ be conjugate. \smallskip \\
Let $\alpha$ and $\beta$ be such that $p_y =\alpha\beta$ and $p_v =\beta\alpha$. Since $x^ip_y$ is a prefix of $u^ip_v^+$, we can see that $u^{-i}x^i \alpha \beta \pref \beta (\alpha \beta)^+$. From Lemma \ref{cor0} we infer that and $u^{-i}x^i \in \beta (\alpha \beta)^*$. Similarly, by the mirror symmetry,  $p_yx^k \suf p_v^+u^k$ yields that  $x^k u^{-k}\in  (\alpha \beta)^* \alpha$. Then 
\[x^{i+k}=u^ip_v^n u^k,\]
for some $n \geq 1$. From $|v|>|y|$, it follows that $|v| \geq |y|+|p_v|$ and, consequently,
\[(i+k)(|x|-|u|)=j(|v|-|y|) \geq 3|p_v|.\]
Then $n \geq 3$ and we are done by Lemma \ref{cor5}.} \medskip
\item[(C)]{Let $p_x$ and $p_v$ be conjugate.\smallskip \\ 
Let $\alpha$ and $\beta$ be such that $p_x =\alpha\beta$ and $p_v =\beta\alpha$. From \eqref{*} and $i\geq 2$, it follows that $u^ip_v$ is a prefix of $p_x^2$. Then $u^i(\beta\alpha) \pref \alpha (\beta \alpha)^+$ and  Lemma \ref{cor0} yields that $u^i \in \alpha (\beta \alpha)^*$. From $i|u|<|p_x|$, it follows $u^i=\alpha$. Since $p_x$ is a suffix of $\alpha \beta\alpha u^k=p_x u^{i+k}$ and $u$ is a prefix of $p_x$, we  deduce from Lemma \ref{maxsuf} that $x$ and $u$ commute, case (A).}

%From \eqref{*} and $i\geq 2$, we obtain $\alpha=u^i$, since $u^i\beta\alpha$ is a prefix of $(\alpha\beta)^2$. 
\end{enumerate}

We will now discuss separately cases when $|x| \geq |v|$ and $|x| <|v|$.\\

\rozbor{1. Suppose that $|x|\geq |v|$.}
\begin{figure}
\centering
\begin{tikzpicture}[axis/.style={very thick, |->, >=stealth'},scale=0.6]
\def\wordlist{0/b, 1/b, 2/a, 3/b, 4/b, 5/a, 6/b, 7/b}
\tikzset{
 arrowbottom/.style={-, >=latex, shorten >=2pt, shorten <=2pt, bend right=45, thick, dashed},
arrowtop/.style={-, >=latex, shorten >=2pt, shorten <=2pt, bend left=45, thick, dashed}
 }
\draw (0,1) rectangle  node[anchor=center]{$x$} (3,2) ;
\draw (3,1) rectangle  node[anchor=center]{$x$} (6,2) ;
\draw (7.5,1) rectangle  node[anchor=center]{$y^j$} (10.5,2) ;
\draw (12,1) rectangle  node[anchor=center]{$x$} (15,2) ;
\draw (0,0) rectangle  node[anchor=center]{$u^i$} (1.5,1) ;
\draw (1.5,0) rectangle  node[anchor=center]{$v$} (4,1) ;
\draw (11,0) rectangle  node[anchor=center]{$v$} (14,1) ;
\draw (14,0) rectangle  node[anchor=center]{$u^k$} (15,1) ;
\draw[dashed] (6,2) -- (7.5,2);
\draw[dashed] (6,1) -- (7.5,1);
\draw[dashed] (10.5,2) -- (12,2);
\draw[dashed] (10.5,1) -- (11,1);
\draw[dashed] (4,0) -- (11,0);
\fill [fill=none] (6,1) rectangle  node[anchor=center]{$...$} (7.5,2);
\fill [fill=none] (10.5,1) rectangle  node[anchor=center]{$...$} (12,2);
\fill [fill=none] (4,0) rectangle  node[anchor=center]{$...$} (11,1);
\end{tikzpicture}
  \caption{Case $|x| \geq |v|$.}\label{fig:case0}
\end{figure}

 If $i\geq 3$ or $x \neq p_x$, then $(u^{-i}x)x^{i-1}$ is a prefix of $v^j$ that is longer than $|p_x|+|x|$ by \eqref{*}. By the Periodicity lemma, $p_x$ is a conjugate of $p_v$ and we are in case (C). The remaining cases  deal with  $i = k = 2$ and $i = 2$, $k = 1$. 

\rozbor{1a)} First suppose that $i = k = 2$. Since $(u^{-i}x)x$ is a prefix of $v^j$ and $x(xu^{-k})$ is a suffix of $v^j$, we get, by Lemma \ref{cor1}, that $(u^{-i}x)x(xu^{-k})$ commutes with $v$. Then
$$x^3 = u^ip_v^n u^k,$$
for some $n \geq 0$. From $(i+k-1)|u| < |p_x| \leq |x|$ and $|p_v| \leq|v| \leq|x|$ we infer that $n\geq 2$. Therefore, $p_u = p_x$ holds by Lemma \ref{cor5}, and we have case (A).

\rozbor{1b)} Suppose now that $i = 2$ and $k = 1$. {We will have a look at the words $u$ and $x=p_x$ expressed by \eqref{**}. Let $h=(\sigma\tau)^m$ and $h' =(\tau\sigma)^m$. Then  \eqref{**} yields
\begin{align*}
u=(\sigma \tau)^\e\sigma, & & x=u^2h'=uhu.
\end{align*}
\rozbor{1b.i)}  Suppose now that $|p_v| \leq|uh|$. Since $h'uh$ is a prefix of $v^j$ and $uh$ is a suffix of $v^j$, we obtain by Lemma \ref{cor1} that
$h'uh = p_v^n$ . From $|p_v| \leq |uh|$, we infer $n \geq 2$ and, according to Lemma \ref{cor5}, $\sigma$ and $\tau$ commute. Then also $x$ and $u$ commute and we have case (A).\\
\rozbor{1b.ii)} Suppose that $|p_v| > |uh|$. From $|x| \geq |v| \geq |p_v|$, it follows that $p_v = h'uu_1$ for
some prefix $u_1$ of $u$. We can suppose that $u_1$ is a proper prefix of $u$, otherwise $x$ and $v$ are conjugate and we have case (C). Then $u_1h' \pref uh' \pref (\sigma \tau)^+$ and, by Lemma \ref{cor2}, we obtain $u u_1^{-1}\in (\sigma\tau)^+$. Therefore, $u_1 \in (\sigma\tau)^*\sigma$. Since $h \suf p_v$, we can see that  $\sigma\tau \suf \tau\sigma^+ $. Lemma \ref{maxsuf} then implies commutativity of $\sigma$ and $ \tau$. Therefore, the words $x$ and $u$ also commute and we are in case (A).}\\

\rozbor{2. Suppose that $|x| < |v|$ and $i|x| = i|u| + |v|$.}\\
{ From $x \suf v$, we have $x \suf xu^k$. Since $u \pref x$ we deduce from Lemma \ref{maxsuf} that $x$ and $u$ commute, thus we have case (A).}\\

\rozbor{3. Suppose that $|x| < |v|$ and $i|x| > i|u| + |v|$.}

\begin{figure}
\centering
\begin{tikzpicture}[axis/.style={very thick, |->, >=stealth'},scale=0.6]
\def\wordlist{0/b, 1/b, 2/a, 3/b, 4/b, 5/a, 6/b, 7/b}
\tikzset{
 arrowbottom/.style={-, >=latex, shorten >=1pt, shorten <=1pt, bend right=45, thick, densely dotted},
arrowtop/.style={-, >=latex, shorten >=1pt, shorten <=1pt, bend left=45, thick, densely dotted}
 }
\draw[arrowtop] (6,2) to node  [above] {$r$}(7.5,2) ;
\draw (0,1) rectangle  node[anchor=center]{$x^i$} (7.5,2) ;
\draw (7.5,1) rectangle  node[anchor=center]{$y^j$} (10.5,2) ;
\draw (10.5,1) rectangle  node[anchor=center]{$x^k$} (15,2) ;
\draw (0,0) rectangle  node[anchor=center]{$u^i$} (1.5,1) ;
\draw (1.5,0) rectangle  node[anchor=center]{$v$} (6,1) ;
\draw (6,0) rectangle  node[anchor=center]{$v^{j-1}$} (14,1) ;
\draw (14,0) rectangle  node[anchor=center]{$u^k$} (15,1) ;
\end{tikzpicture}
  \caption{Case $|x|< |v|$ and $i|x| >i|u|+|v|$.}\label{fig:case1}
\end{figure}

Let $r$ be a non-empty word such that $u^ivr = x^i$. Notice that $|r|<|p_x|$ otherwise the words $p_x$ and $p_v$ are conjugate and we have case (C). Considering the words $u$ and $p_x$ expressed by \eqref{**}, we can see that $(\tau\sigma)^mu^{k-1}u^i$ is a prefix of $v$ and $u^{i-1}(\sigma\tau)^m$ is a suffix of $v$. Notice also that we have case (A) if $\sigma$ and $\tau$ commute.

\rozbor{3a)} Consider first the special case when $r=u^k$.

\rozbor{3a.i)} If $i=k$, then $v^{j-2}=u^iy^ju^i$. If $j \geq 4$, we have case (B) by Lemma \ref{cor5}. If $j=3$, then the equality $u^ivr=x^i$ implies $x^i=u^{2i}y^ju^{2i}$ and we get case (A) again by Lemma \ref{cor5}. 

\rozbor{3a.ii)} Suppose therefore that $k<i$. Notice that $u=\sigma$, otherwise, from  $\tau\sigma\pref v$ and $u^k=r \pref v$, we get commutativity of $\sigma$ and $\tau$. Therefore, 
\[v \in (\tau \sigma)^{m}\sigma^{k-1} p_x^* \sigma^{i-1}(\sigma\tau)^m.\]
We have 
\[vu^kx^{-k} =vrx^{-k}=u^{-i}x^{i-k}.\]
From $i>k$ and \eqref{*} we get $|u^{-i}x^{i-k}|>0$ and, consequently, $|vu^k|>|x^k|$. Let $v'$ dente the word $vu^kx^{-k}$. Then $v^{j-2}v' =ry^j$, and  $j \geq 3$ together with $|v| >|x|>|u^k|=|r|$ yields that $v'$ is a suffix of $y^j$. According to \eqref{**}, $v'=u^{-i}x^{i-k} \in (\tau \sigma)^{m}\sigma^{k-1} p_x^*$. Then, $\sigma^{k}$ is a suffix of $y^j$ and we have
$$(\sigma^ky\sigma^{-k})^j =\sigma^ky^j\sigma^{-k}=v^{j-2} v'\sigma^{-k}.$$
This is a point where Lemma \ref{cor5} turns out to be extremely useful. Direct inspection yields that $v^{j-2}v' \sigma^{-k}$ is not a $j$th power of a word from $\{\sigma,\tau\}^*$. One can verify, for example, that the expression of $v^{j-2}v'\sigma^{-k}$ in terms of $\sigma$ and $\tau$ contains exactly $j-2$ occurrences of $\tau^2$. Therefore, Lemma \ref{cor5} yields that $\sigma$ and $\tau$ commute, a contradiction.

\rozbor{3b)} We first show that  $r=u^k$ holds if $k\geq 2$. Indeed,  if $k\geq 2$ then $u^{k}p_xu^{-k}$ is a suffix of $v$ and, consequently, $u^{k}p_xu^{-k} r$ is a suffix of $x^i$. Since $u^{k}p_xu^{-k} u^k$ is also a suffix of $x^i$, we can use Lemma \ref{cor3} and get commutativity of $x$ with one of the words $u^{-k}r$ or $r^{-1}u^{k}$. From $|r|<|p_x|$ and $|u^k|<|p_x|$, we get $r = u^k$.

\rozbor{3c)} Suppose  that $k=1$ and $r \neq u$. \\
\rozbor{3c.i)} If $|r|<|u|$, then $r$ is a suffix of $u$ and $|xr^{-1}u|>|x|$. Since $xr^{-1} \suf v$ and $k=1$, the word $x=xr^{-1}r$ is a suffix of $xr^{-1}u$. Therefore, $xr^{-1}$ is  a suffix of $(ur^{-1})^+$. Since $u^2 \pref x$ and $|xr^{-1}| \geq |u|+(|u|-|r|)$, the Periodicity lemma implies that the primitive root of $ur^{-1}$ is a conjugate of $p_u$. But since $p_u$ is prefix comparable with $ur^{-1}$, we obtain that $ur^{-1} \in p_u^+$. Then also $r \in p_u^+$ and $xr^{-1} \in p_u^+$.  Consequently, $x$ and $u$ commute, and we have case (A). \\
\rozbor{3c.ii)} Suppose therefore that $|r| > |u|$.  Then $u$ is a suffix of $r$. Since $r$ is a suffix of $p_x$ and $p_x=u^{i}(\tau \sigma)^m$, the word $r$ is a suffix of $u^{i}(\tau \sigma)^m$. From $|v|>|x|$ we obtain  $u^{-i}xu^i \pref v$. Consequently, from $p_x=u^{i}(\tau \sigma)^m$ and $r \pref v$, it follows that $r$ is a prefix of  $(\tau \sigma)^mu^i$.

Consider first the special case when $r \in (\tau \sigma)^m p_u^*$. If $r \in (\tau \sigma)^m p_u^+$, then $r \suf u^{i}(\tau \sigma)^m$ yields that $(\tau\sigma)^m$ and $u$ commute by Lemma \ref{maxsuf}.  Consequently, $\sigma$ and $\tau$ commute, and we have case (A). Therefore, $r=(\tau \sigma)^m$, $p_x=u^ir$ and $v=u^{-i}x^ir^{-1} \in (ru^{i})^+$. We have proved that $x$ and $v$ have conjugate primitive roots, which yields case (C). Consider now the general case.

If $m \leq \e$, then $(\tau \sigma)^m$ is a suffix of $u$. Since $r$ is a prefix of  $(\tau \sigma)^mu^i$, and $u \suf r$, we get from Lemma \ref{cor0} the case $r \in (\tau \sigma)^m p_u^*$. 

Suppose that $m > \e$. Then $u$ is a suffix of $(\tau\sigma)^m$. Let $s$ denote the word $(\tau\sigma)^mu^{-1}=(\tau\sigma)^{m-\e-1}\tau$.

If $|r|\geq|(\tau\sigma)^m|$, then $r=s'su$ for some $s'$. From $r \pref (\tau \sigma)^mu^i$, it follows that $s'su$ is a prefix of $su^{i+1}$. Lemma \ref{cor0cor} then yields $s's\in sp_u^*$. Therefore $r\in sup_u^*$ and from $su=(\tau\sigma)^m$, we have the case $r \in (\tau \sigma)^m p_u^*$.

Let $|r|<|(\tau\sigma)^m|$. From $|r|>|u|$ and $(\tau\sigma)^m=su$, we obtain that there are words $s_1,s_2$ such that $s=s_1s_2$, $r=s_2u\pref v$ and $s_1\suf v$. Since $s$ is both a prefix and a suffix of $v$, Lemma \ref{mother} implies that $s_1$ and $s_2$ have the same primitive root, namely $p_s$.

Note that $p_x=u^is u$. We now have
\[u^is_2s_1=u^{i}s \suf v\suf x^ir^{-1} \suf (u^isu)^{i-1}u^is_1.\]
From $i \geq 2$, it follows that $u^{i} s_2$ is a suffix of $(u^isu)^{i-1}u^i$ for some $n \geq 1$. Lemma \ref{maxsuf} then yields commutativity of $s$ and $u$. Hence, words $x$ and $u$ also commute and we are in case (A).\\

\rozbor{4. Suppose now that $|x| < |v|$ and $i|x| < i|u| + |v|$.}

\begin{figure}
\centering
\begin{tikzpicture}[axis/.style={very thick, |->, >=stealth'},scale=0.6]
\def\wordlist{0/b, 1/b, 2/a, 3/b, 4/b, 5/a, 6/b, 7/b}
\tikzset{
 arrowbottom/.style={-, >=latex, shorten >=1pt, shorten <=1pt, bend right=45, thick, densely dotted},
arrowtop/.style={-, >=latex, shorten >=1pt, shorten <=1pt, bend left=45, thick, densely dotted}
 }
\draw[arrowtop] (6.8,2) to node  [above] {$\alpha$}(7.3,2) ;
\draw[arrowtop] (7.3,2) to node  [above] {$\beta$}(8.5,2) ;
\draw[arrowtop] (8.5,2) to node  [above] {$\gamma$}(9.1,2) ;
\draw[arrowtop] (9.1,2) to node  [above] {$\alpha$}(9.6,2) ;
\draw[arrowtop] (9.6,2) to node  [above] {$\beta$}(10.8,2) ;
\draw[arrowtop] (10.8,2) to node  [above] {$\gamma$}(11.4,2) ;
\draw[arrowtop] (11.4,2) to node  [above] {$\alpha$}(11.9,2) ;
\draw[arrowtop] (11.9,2) to node  [above] {$\beta$}(13.1,2) ;
\draw[arrowtop] (13.1,2) to node  [above] {$\gamma$}(13.7,2) ;
\draw (0,1) rectangle  node[anchor=center]{$x^i$} (6.8,2) ;
\draw (6.8,1) rectangle  node[anchor=center]{$y$} (9.1,2) ;
\draw (9.1,1) rectangle  node[anchor=center]{$y$} (11.4,2) ;
\draw (11.4,1) rectangle  node[anchor=center]{$y$} (13.7,2) ;
\draw (13.7,1) rectangle  node[anchor=center]{$x^k$} (20,2) ;
\draw (0,0) rectangle  node[anchor=center]{$u^i$} (1.5,1) ;
\draw (1.5,0) rectangle  node[anchor=center]{$v$} (7.3,1) ;
\draw (7.3,0) rectangle  node[anchor=center]{$v$} (13.1,1) ;
\draw (13.1,0) rectangle  node[anchor=center]{$v$} (19,1) ;
\draw (19,0) rectangle  node[anchor=center]{$u^k$} (20,1) ;
\end{tikzpicture}
  \caption{Case $|x| < |v|$ and $i|x| <i|u|+|v|$.}\label{fig:case2}
\end{figure}
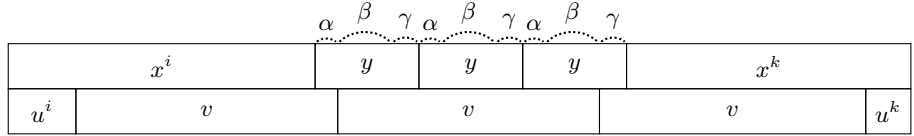

First notice that in this case also $k|x| < k|u|+|v|$. If $j|y| \geq |v|+|p_y|$, then, by the Periodicity lemma, $p_v$ and $p_y$ are conjugate, and theorem holds by (B). Assume that $j|y| < |v| + |p_y|$. Then, since $i|x| < i|u| + |v|$ and $k|x| < k|u| + |v| $, we can see that $j = 3$ and there are non-empty words $\alpha$, $\beta$ and $\gamma$ for which $y = \alpha\beta\gamma$ and $v = (\beta\gamma)(\alpha\beta\gamma)(\alpha\beta)$, with $|\alpha\gamma| < |p_y|$.

\rozbor{4a)} Suppose first that $|u^i\gamma| \leq|x|$. Notice that also $|\alpha u^k| \leq|x|$ since $k \leq i$ and $|\gamma| = (i-k)(|x|-|u|)+|\alpha|$. Then $|\gamma x| \leq |v|$ and  $u^i\gamma x$ is a prefix of $x^2$. Therefore, by Lemma \ref{cor0}, $u^i\gamma$ commutes with $x$. We obtain the following equalities:
\begin{align*}
&v = \gamma p_x^n\alpha, & & y^j = \alpha v\gamma = (\alpha\gamma)p_x^{n}(\alpha\gamma),
\end{align*}
where $n \geq 1$. If $n \geq 2$, then $x$ and $y$ commute by Lemma \ref{cor5}.  If $n=1$, then $p_x =x$ and $i=2$. Since $\gamma x^k =vu^k =\gamma x\alpha u^k$ and $|\alpha u^k| \leq|x|$, also $k=2$ and $\alpha u^k =x$. Then $|\alpha|=|\gamma|$ and $u^2\gamma=x=\alpha u^2$. If $|u|\geq|\gamma|$, then $u$ and $\gamma$ commute, a contradiction with $p_x = x$. Therefore, $ |x| < 3|\gamma|$ and $ |v| = |\gamma x\alpha| < 5|\gamma|$. Since $ \gamma$ is a suffix of $x$ and $\alpha$ is a prefix of $x$, $(\gamma\alpha\beta)^3\gamma\alpha$ is a factor of $v^3$ longer than $|y| + |v|$. Therefore, by the Periodicity lemma, words $y$ and $v$ are conjugate, and we have case (B).\\

\rozbor{4b)} Suppose that $|u^i\gamma| > |x|$, denote $z = x^{-1}u^i\gamma$ and $z' = \gamma^{-1}v\alpha^{-1} = x^ku^{-k}\alpha^{-1}$. From
$$|y| + |\gamma| + |\alpha| < |v| = |\gamma z'\alpha|,$$
we deduce $|y| <|z'|$ . Since $x^{i-1} = zz' $ and $z'$ is  a prefix of $x^k$, the word $zz' $ has a period $|z| < |\gamma|$. Since $zz'$ is a factor of $v$ greater than $|z| + |y|$ and $v$ has a period $|p_y|$, the Periodicity lemma implies $|p_y| \leq|z| < |\gamma|$, a contradiction with $|\gamma| < |p_y|$.
\qed

\end{proof}

\section{Conclusion}

The minimal bounds for $i$, $j$, $k$ in the previous theorem are optimal. 
{
{ In case that $i=k$ and $j$ is even, Eq. \eqref{eqn:rovnice1} splits into two separate equations, which have a solution if and only if either $i=k$ and $j=2$, or $i=k=1$, see \cite{karhumaki}.} 

Apart from these solutions, we can  find non-periodic solutions also in case that $i \neq k$. Namely, for $j=2$ and $i=k+1$, we have
%In case that $i=k$ and $j=2$, or $i=k=1$ and $j$ is even, we have two separate equations, which are solved in \cite{karhumaki}. Apart from these solutions, we can even find non-periodic solutions in case that $i \neq k$. Namely, for $j=2$ and $k=i+1$, we have
\begin{align*}
&x = \alpha^{2k+1}(\beta\alpha^k)^2, & &u = \alpha, \\
&y = \beta\alpha^k, & &v = (\alpha ^k\beta)^2(\alpha^{3k+1}\beta\alpha^k\beta )^k.
\end{align*}
So far this seems to be the only situation when the equation
\begin{align}\label{dvojka}
x^iy^2x^k = u^iv^2u^k
\end{align}
with  $i > k$ has a non-periodic solution. We conjecture that if $|i-k| \geq 2$, then Eq. \eqref{dvojka} has only periodic solutions. 

If $i=k=1$ and $j$ is odd, then Eq. \eqref{eqn:rovnice1} has several non-periodic solutions, for example:
\begin{align*}
& x = \alpha\beta\alpha, & & u = \alpha, \\
& y = \gamma, & &v = \alpha\gamma^j\alpha,
\end{align*}
where $\beta^2=v^{j-1}$.}

%\begin{align*}
%& x =\alpha v^{n} \alpha , & & u = \alpha, \\
%& y = \beta, & &v =\alpha \beta^j \alpha,
%\end{align*}
%where $n = \frac{1}{2} (j+1)$.

\bibliographystyle{splncs03}
\bibliography{bibliography1}

\end{document}